\definecolor{white}{rgb}{1,1,1}
\definecolor{black}{rgb}{0,0,0}
\definecolor{grey}{rgb}{0.7,0.7,0.7}
\definecolor{dgrey}{rgb}{0.5,0.5,0.5}
\definecolor{lightgrey}{rgb}{0.88,0.88,0.88}
\definecolor{lgrey}{rgb}{0.9,0.9,0.9}
\definecolor{llgrey}{rgb}{0.93,0.93,0.93}
\definecolor{lllgrey}{rgb}{0.96,0.96,0.96}
\definecolor{tableHeadGray}{rgb}{0.85,0.85,0.85}
\definecolor{oddRowGrey}{rgb}{0.95,0.95,0.95}
\definecolor{evenRowGrey}{rgb}{0.85,0.85,0.85}
\definecolor{yellow}{rgb}{1.0, 1.0, 0.0}
\definecolor{lightyellow}{rgb}{1.0, 1.0, 0.88}
\definecolor{selectiveyellow}{rgb}{1.0, 0.73, 0.0}
\definecolor{shadered}{rgb}{1,0.85,0.85}
\definecolor{red}{rgb}{1,0,0}
\definecolor{shadegreen}{rgb}{0.95,1,0.95}
\definecolor{green}{rgb}{0,1,0}
\definecolor{darkgreen}{rgb}{0,0.5,0}
\definecolor{shadeblue}{rgb}{0.95,0.95,1}
\definecolor{blue}{rgb}{0,0,1}
\definecolor{darkblue}{rgb}{0,0,0.5}
\definecolor{darkpurple}{rgb}{0.5,0,0.5}
\definecolor{darkdarkpurple}{rgb}{0.3,0,0.3}
\newtheorem{theorem}{Theorem}
\newtheorem{proposition}{Proposition}
\newtheorem{definition}{Definition}
\newtheorem{remark}{Remark}
\newtheorem{example}{Example}
\algrenewcommand\algorithmicrequire{\textbf{Input:}}
\algrenewcommand\algorithmicensure{\textbf{Output:}}
\algnewcommand\algorithmicinput{\textbf{INPUT:}}
\algnewcommand\INPUT{\item[\algorithmicinput]}
\tikzstyle{startstop} = [rectangle, rounded corners, minimum width=3cm, minimum height=1cm,text centered, draw=black]
\tikzstyle{process} = [rectangle, rounded corners,minimum width=3cm, minimum height=.8cm,, text centered, draw=black]
\tikzstyle{decision} = [diamond, minimum width=2cm, minimum height=1cm, text centered, draw=black]
\tikzstyle{arrow} = [thick,->,>=stealth]
\tikzstyle{blank} = [node distance=1cm]
\tikzstyle{block} = [rectangle, draw, fill=blue!20, text centered, rounded corners, minimum height=1em]
\tikzstyle{line} = [draw, -latex']
\tikzstyle{cloud} = [draw, ellipse,fill=red!20, node distance=4cm,minimum height=1em]
\title{\bf  Opacity of Discrete Event Systems with Active Intruder}
\author{Alireza~Partovi~\IEEEmembership{Student~Member,~IEEE,}
        Taeho~Jung,~\IEEEmembership{Member,~IEEE,}
        and~Hai~Lin,~\IEEEmembership{Senior~Member,~IEEE}
        	\thanks{This work was supported in part by the National Science Foundation under Grant IIS-1724070 and  Grant CNS-1830335.}
	\thanks{A. Partovi and H. Lin are with the Department of Electrical Engineering, University of Notre Dame, Notre Dame,	IN, 46556 USA. Emails: {\tt\small apartovi@nd.edu, hlin1@nd.edu}. T. Jung  is with the Department of Computer Science and Engineering, University of Notre Dame, Notre Dame, IN, 46556 USA. E-mail: {\tt\small  tjung@nd.edu}. 
	 }
		}
\tikzstyle{state}=[rectangle,thick,draw=black!75,
 \tikzstyle{state_p}=[circle,thick,draw=black!75,
\tikzstyle{state_obs}=[ellipse,thick,draw=black!75,
\newcommand*{\LG}{\mathcal{L}(G)}
\begin{document}
\maketitle
\thispagestyle{empty}

\pagestyle{empty}

\begin{abstract} 
Opacity is a security property formalizing the information leakage of a system to an external observer, namely intruder. The conventional opacity that has been studied in the Discrete Event System (DES) literature usually assumes passive intruders, who only observe the behavior of the system. However, in many cybersecurity concerns,  such as web service, active intruders, who are capable of influencing the system's behavior beyond passive observations, need to be considered and defended against. 
We are therefore motivated to extend the opacity notions to handle active intruders. For this, we model the system as a non-deterministic finite-state transducer. It is assumed that the intruder has a full knowledge of the system structure and is capable of interacting with the system by injecting different inputs  and observing its responses. In this setup, we first introduce reactive current-state opacity (RCSO) notion characterizing a property that the system does not leak its secret state regardless of how the intruder manipulates the system behavior. 
We furthermore extend this notion to language-based and initial-state reactive opacity notions, and study the relationship among them. It turns out that all the proposed reactive opacity notions are equivalent to RCSO. We therefore focus on RCSO and study its verification problem.
It is shown that the RCSO can be verified by constructing an observer automaton. 
Illustrative examples are provided throughout the paper to demonstrate the key definition and the effectiveness of the proposed opacity verification approach. 

\end{abstract}

\section{ Introduction }
Cybersecurity is increasingly becoming a great concern as networks of embedded-systems and computers are integrated into almost all aspects of our daily life and society. 
Exchanging confidential information over these networks is crucial in many applications, ranging from smart phones  and home automation to banking services. This  raises a serious  concern on the vulnerability of these systems. 


Many efforts have been made to develop  reliable and secure systems that led to  various notions of security/privacy. One class of security/privacy notations is related to \textit{Information flow} from the system to an external  observer \cite{focardi1994taxonomy}.  \textit{Opacity}  is a type of  information-flow  property that characterizes whether the system's secret information can be inferred by an external observer termed intruder with potentially malicious intentions  \cite{lin2011opacity}.  It is usually
assumed that the intruder knows the system's structure but has only partial observation over its behavior \cite{jacob2016overview}. The system is considered  to be \textit{opaque}  if the intruder is not able to unambiguously determine  the system secrets from its observations.

In recent years, opacity has been extensively studied in the discrete event system (DES) literature, and different  notions of opacity have been proposed, including current-state opacity \cite{saboori2007notions}, language-based opacity \cite{lin2011opacity}, initial-state opacity \cite{saboori2013verification}, $K-$step, and  infinite-step opacity \cite{YIN2017162}.
Interested readers may refer to \cite{jacob2016overview} for a comprehensive review on various notions of opacity.



It is worthy pointing out that the intruder model considered in these methods is a passive observer who is only able to partially observe the system behavior. However, many real-world systems are interacting with malicious and hostile environments, whose capability is beyond a passive observation. A system's malicious environment can act as an \textit{active intruder}, who strategically injects a certain input to the system and observers the system's response to infer its secret. For instance, web browsers and client-side web applications are typical cases of such systems since they interact with remote and possibly untrusted clients that raise a serious concern about the privacy of local users' data \cite{bohannon2009reactive}.

In this paper, we aim at extending the opacity notion in the presence of an active intruder. In particular, who  is  capable of manipulating the system's input and partially observing the system output. 
This setup  naturally models  reactive systems \cite{partovi2019reactive}, 
such as interactive programs \cite{o2006information} and web services \cite{bohannon2009reactive}, 
where input provided by the environment (possibly intruder) and the output of the system is exchanged continuously throughout the  indefinite execution of the system. 

Toward this aim, we introduce \textit{reactive current-state opacity} (RCSO) characterizing the active intruder's ability in manipulating the system's input to certainly determine if the system's current-state is a secret state.
We furthermore extend this notion to \textit{reactive language-based opacity} and \textit{reactive initial-state opacity}. Reactive language-based opacity  requires the secret  behavior of the system to be indistinguishable from a non-secret one. Reactive initial-state opacity notions ensure the active intruder cannot unambiguously determine if the system starts from a secret initial-state. Upon these opacity notions, we present their relationship, the feasibility of each notion, and a procedure to transform one to the other. 
 It turns out that all the proposed reactive opacity notions are equivalent to RCSO. We therefore focus on RCSO, and we study its verification problem.
 
Formal verification of current-state opacity is addressed in \cite{bryans2005opacity} and is further extended to other notions of opacity in \cite{saboori2007notions,saboori2013verification}.  In analogs to verification of opacity with the passive intruder, here we propose to construct an observer automata.  Given the intruder choice of input and the system response (the observable output event), the observer states capture the estimated current-state of the system. Hence, the RCSO verification problem can be reduced to finding the observer states that include a singleton of the secret states.

The contribution of this paper  can be summarized as follows. (i) Consider a new intruder model who has the capability of injecting input into the system; (ii) associated with the new intruder model, we introduce a new class of opacity definitions including the reactive current-state, reactive initial-state, and reactive language-based opacity notions and studies the relationship among them; (iii) provide necessary and sufficient conditions for verification of reactive current-state opacity.

\section{Related Notations} \label{sec:pre}
In this section, we review some preliminary notations that will be used throughout the paper.
For a given finite set (alphabet) of {\it events} $\Sigma$, a finite {\it word} $w=\sigma_1\sigma_2 \ldots \sigma_n$, $n \ge 1$, is a finite sequence of elements in $\Sigma$,  for all $\sigma_i \in \Sigma$, and $1 \le i \le n$. We denote  the length of $w$ by $\vert w \vert$. { Let $w$, and $u$ be finite words, $w \cdot u$ is their \textit{concatenations}.}
The notation $2^\Sigma$ refers to the power set of $\Sigma$, that is, the set of all subsets of $\Sigma$.
A set difference is  $\Sigma-A=\{x \mid x \in \Sigma, x \not \in A\}$. The \textit{free monoid} $\Sigma^*$ generated by $\Sigma$ is the set of all finite sequences $\sigma_1\sigma_2 \ldots \sigma_n$, including the empty sequence denoted by $\epsilon$. A subset of $\Sigma^*$ is called a {\it language} over $\Sigma$.
The {\it prefix-closure} of a language $\mathcal{L} \subseteq \Sigma^*$, denoted as $\overline{\mathcal{L}}$, is the set of all {\it prefixes} of words in $\mathcal{L}$, i.e., $\overline{\mathcal{L}}=\{s\in\Sigma^*|(\exists t\in\Sigma^*)[st\in \mathcal{L}]\}$. $\mathcal{L}$ is said to be {\it prefix-closed} if $\overline{\mathcal{L}}=\mathcal{L}$. 
Let's consider alphabet sets $X$, $Y$, and their set product $\Sigma_{XY}=X \times Y$. A relation $R$ over sets $X$ and $Y$ is a subset of the Cartesian product $X \times Y$. A regular (or rational) relation over the alphabets $X$ and $Y$ is formed from
a finite combination of the following rules:
1: $(x,y) \in (X \cup \{\epsilon \}) \times  (Y \cup \{\epsilon\}$), 2: $\emptyset$ is a regular relation, and 3: If $R_1$, $R_2$ are regular relations, then so are $R_1\cdot R_2$, $R_1 \cap R_2$, and $R_1^*$.
 Projection function to  sets $X$ and $Y$  are respectively denoted as  $\mathrm{P}_{X}=\Sigma^*_{XY} \to X^*$, $\mathrm{P}_{Y}=\Sigma^*_{XY} \to Y^*$, and 
  inductively are defined by $\mathrm{P}_X((\epsilon,\epsilon))=\epsilon$, and 
$\forall w \in \Sigma^*_{XY} $, and $(x,y)\in \Sigma^*_{X Y}$, we have $\mathrm{P}_X(w \cdot (x,y))=\mathrm{P}_X(w) \cdot x$, and   $\mathrm{P}_Y(w \cdot (x,y))=\mathrm{P}_Y(w) \cdot y$. 

A non-deterministic finite state automata (NFA) $A=(Q,\Delta,Q_{0},T_a)$ is a 4-tuple composed of finite state $Q$, a finite set of event $\Delta$, a partial state transition function $T_a: Q \times \Delta \to 2^{Q}$, and the set of  initial states $Q_{0}$. The transition function $T_a$ can be extended to word in a standard recursive manner.
The behavior of NFA
$A$ is captured by $\mathcal{L}(A)=\{s \in  \Delta^* \mid \exists q_0 \in Q_{0} \text{ s.t. } T_a (q_0,s) \neq \emptyset\}$, and for a given initial state $q_0 \in Q_0$ is $\mathcal{L}(A,q_0)=\{s \in \Delta^* \mid T_a (q_0,s) \neq \emptyset\}$. 
$A$ is called deterministic finite automata (DFA) if for any $q \in Q$ and $\delta \in \Delta$ that $T(q,\delta)$ is defined, $|T_a(q,\delta)| = 1$. 

\section{Open Discrete Event System} \label{sec:open_dec}
The finite-state transducers capture transformation of data that is realized by processing inputs and producing outputs using finite memory \cite{mohri2004weighted}. 
We use non-deterministic finite-state transducer (NFT) to characterize the interaction between the system and its environment.
Throughout this paper, we refer to NFT as an \textit{open DES} to emphasize a system model which receives input from an active intruder. 
\begin{definition}[Non-deterministic Finite-State Transducer] \label{def:NFT}
The   nondeterministic finite-state transducer is defined by $G=(Q,X,\Delta, Q_{0},T,\lambda)$, where $Q$ is the finite set of states, $X$ is  finite set of external events, $\Delta=\Delta_{o} \cup \Delta_{uo}$, is the finite set of output events which is partitioned to two disjoint sets of observable output events $\Delta_o$ and unobservable output events $\Delta_{uo}$.  $Q_0$ is the set  of  initial states. The state transition function is $T: Q \times X_\epsilon  \to 2^{Q}$, and $\lambda: Q \times X_\epsilon  \to 2^{\Delta_\epsilon}$ is the output function, where $X_\epsilon= X \cup \{\epsilon \}$ and $\Delta_\epsilon= \Delta \cup \{\epsilon \}$.
\end{definition}
The notation $T(q,x)!$ means that $T(q,x)$ is defined for $x \in X$ and state $q \in Q$.
The extension of $T$ to words is denoted as $T^*:Q \times X^* \to 2^{Q}$ and can be defined recursively for all $q \in Q$ as
$ T^*(q,w)=q$ if $w= \epsilon$, and $T^*(q,w)=\bigcup_{q'\in T(q,x)} T^*(q',v)$ if $w=x\cdot v, x \in X,$ and  $v\in X^*$ \cite{khalili2014learning}.
Here, $T(q,\epsilon)=q$ for each $q \in Q$, indicates that if the input is the empty word, we will remain at the current state.
The extension of output function to words also is denoted as $\lambda^*:Q \times X^* \to 2^{\Delta^*}$, and it can be defined as follows. Given any $w \in X^*$, and $s \in \Delta^*$, we have $s \in \lambda^*(q,w)$  for some $q\in Q$, if and only if, either $w=s=\epsilon$, or  $w=x \cdot w' $, $s= \delta \cdot s'$ for some $x \in X$, and $\delta \in \Delta$, and there exists a state $q' \in Q$ such that $q' \in T(q,x)$, $\delta \in \lambda(q,x)$, and $s' \in \lambda^*(q',w')$. 
The recognized language of $G$ is $\mathcal{L}(G,Q_0)=\{w \in X^* \mid \exists q_0 \in Q_0 \text{ s.t } T(q_0,w)!  \}$. 
Throughout the paper, we use $T$ as a  shorthand for $T^*$,  $\lambda$  for $\lambda^*$, and $\LG$  for $\mathcal{L}(G,Q_0)$.

Given an input word $w \in \LG$, 
 the output word will not be  uniquely determined, due to the non-determinism of the transition and output functions.
For each $q_0 \in Q$ and  $w \in \mathcal{L}(G,q_0)$, a set $O(w,q_0)$ of possible output words is defined inductively as follows:
\begin{itemize}
    \item $O(\epsilon,q_0)=\{ \epsilon\}$,
    \item $\forall w \in \mathcal{L}(G,q_0)$, $\forall x \in X$, such that $w \cdot x \in \mathcal{L}(G,q_0)$: \\
    $O(w \cdot x,q_0)= \{s \cdot \delta \in \Delta^* \mid s \in O(w,q_0) \text{ and } \delta \in  \bigcup_{q \in T(q_0,w)}\lambda(q,x) \}$.
\end{itemize}
We denote $O(w)=\bigcup_{q_0 \in Q_0}O(w,q_0)$. The set of all possible output words in $G$ is denoted by $O(\LG)$, that is, $O(\LG)=\bigcup_{q_0 \in Q_0, w \in \mathcal{L}(G,q_0)} O(w,q_0) \subseteq \Delta^*$. We call $O(\LG)$ the output language of $G$.
\begin{figure}[t]  
\centering \vspace{5pt}
\begin{tikzpicture}[shorten >=1pt,node distance=1.5cm,on grid,auto, bend angle=20, thick,scale=1, every node/.style={transform shape}] 
	\node[state_p,initial left,initial text=] (q_0)   {$\scriptstyle 0$};
    \node[state_p] (q_1) [right=of q_0,xshift=0cm,yshift=1.5cm] {$\scriptstyle 1$};
    \node[state_p] (q_2) [right=of q_0,xshift=1.7cm,yshift=0cm] {$\scriptstyle 2$};
    \node[state_p] (q_3) [right=of q_0,yshift=-1.5cm] {$\scriptstyle 3$};f
	\path[->]
	(q_0) edge [] node [sloped]     { $\scriptstyle x_1\slash \{\delta_1 \delta_2\}
	$} (q_1)
	(q_0) edge [] node [above, align=center, pos=0.4]     { $\scriptstyle x_2\slash \{\delta_2\}$} (q_2)
	(q_0) edge [] node [sloped,below, align=center, pos=0.5]  { $\scriptstyle x_1\slash \{\delta_1 \delta_2\}$} (q_3)
	(q_1) edge [] node [sloped, below,align=center, pos=0.4]     { $\scriptstyle x_1\slash \{\delta_2\}$} (q_2)
	(q_1) edge [loop above] node [ right ]     { $\scriptstyle x_2\slash \{b,a\}$} (q_1)
	(q_2) edge [loop right] node [ pos=0.7]     { $\scriptstyle x_1\slash \{\delta_2, a \}, x_2\slash \{b\} $} (q_2)
	(q_2) edge [bend left=40] node [below,sloped,align=center, pos=0.5 ]     { $\scriptstyle x_2\slash \{\delta_1\}$} (q_3)
	(q_2) edge [bend right=40] node [above,sloped,align=center, pos=0.5 ]     { $\scriptstyle x_2\slash \{\delta_1\}$} (q_1)
	
	(q_3) edge [] node [sloped,above,align=center, pos=0.4] { $\scriptstyle x_1\slash \{\delta_2\}$} (q_2)

	(q_3) edge [loop below] node [above left] { 
	     $\scriptstyle x_1\slash \{\delta_2\}$,
	     $\scriptstyle x_2\slash \{a\}$
} (q_3)
    ;
\end{tikzpicture}\vspace{-10pt}
\caption{An example of open DES $G$. Note that $\epsilon \in \lambda(q,x)$ for all $q \in Q$, and $x \in X_\epsilon$.
We removed the $\epsilon$ input transitions for clarity of the figures.
}\vspace{-10pt}
\label{fig:example_NFT}  
\end{figure}
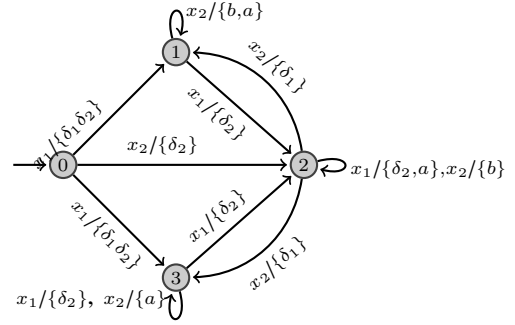 
\begin{example} \label{exp:open_des}
Consider the open DES shown in Figure \ref{fig:example_NFT}, where  $\Delta=\{\delta_1,\delta_2,a,b\}$,  $X=\{x_1,x_2\}$, and the initial state is $Q_0=\{0\}$.
An edge in the model is in the form of  $x \slash Y$, where $x \in X_\epsilon$, represents the  input event, and   and $Y \subseteq \Delta_\epsilon$ denotes the set of possible output events. Multiple labels over an edge indicates  multiple enabled transitions. 
For instance, for $x_1 x_1 \in \LG$, we have $O(x_1x_1)=\{\delta_1 \delta_2, \delta_2 \delta_2\}$, that is, two output words, $\delta_1 \delta_2$, and $\delta_2 \delta_2$ are possible. 
\hfill $\Box$
\end{example}

If there are  marked states, we define open DES as  $G=(Q,X,\Delta, Q_{0},T,\lambda,F)$, where $F \subseteq Q$ are the marked states.
The input-output  language of $G$, denoted as $\mathcal{L}_{io}(G)$, is defined by  $\mathcal{L}_{io}(G)=\{(w,s) \in (X \times \Delta)^* \mid \exists q_0 \in Q_0, \text{ s.t. } T(q_0,w)!, \text{ and } s \in \lambda(q_0,w)\}$, and its   input-output marked  language is given by $\mathcal{L}_{io,m}(G)=\{(w,s) \in (X \times \Delta)^* \mid \exists q_0 \in Q_0, \text{ s.t. } T(q_0,w) \cap F \neq \emptyset, \text{ and } s \in \lambda(q_0,w)\}$.
The input-output languages of $G$ is a {regular relation} over the set $(X \cup \{\epsilon\}) \times (\Delta \cup \{ \epsilon\})$ that  can be conveniently recognized by an non-deterministic finite-state transducer \cite{bouajjani2000regular}.

The \text{accessible} part of an NFT $G=(Q,X,\Delta, Q_{0},T,\lambda,F)$ is denoted by $Ac(G)$ and is obtained by removing the states that cannot be reached from any initial state $q_0 \in Q_0$ in finite number of steps. The coaccessible part of $G$, denoted by $CoAc(G)$ is an NFT obtained by deleting the states that cannot reach to the marked states $F$. The trim operation, denoted by ${Trim}$, transforms $G$ to another NFT as a part of $G$ that is both accessible and coaccessible, formally ${Trim}(G)=Ac(CoAc(G))=CoAc(Ac(G))$ \cite{cassandras2009introduction}. Similarly, for an NFA $A$, we can define $Trim(A)$, $Ac(A)$, and $CoAc(A)$.

\section{Opacity Of Discrete-Event Systems} \label{sec:opacity_def}
Opacity is characterized by the system's secret and the intruder's observation mapping over the system's executions. The system is \textit{opaque}, if for any execution run that contains secret, there exists another non-secret run which is observably equivalent.
In the  formalism of  opacity, the intruder is  considered as  an \textit{observer} who has  full knowledge of  the system structure but has a partial observability  over it. Typically, the intruder's partial observability  is modeled by a \textit{natural projection}  function. 
The natural projection is $P:\Delta^* \to \Delta^*_o$, and for any $s\in \Delta^*$, and $\delta \in \Delta$, it is defined recursively by $P(\epsilon)=\epsilon$, and $P(s \cdot \delta ) =P(s) \cdot \delta$ if $\delta  \in  \Delta_o$ and otherwise $P(s \cdot \delta )=P(s)$.

The system secret information or behavior can be represented in different ways, such as  secret states and languages. 
In the conventional opacity of DESs with passive intruder, various opacity notions for different representation of secret have been introduced including but not limited to current-state, language-based, and initial-state opacity \cite{jacob2016overview}. 

\subsection{Current-State Opacity}
Here, we first discuss the   current-state opacity (CSO) definition when the intruder is just a passive observer; and later, we will show how an active intruder can force a current-state opaque system to expose its secret states. 

\begin{definition}[Current-State Opacity]
Given a non-deterministic finite-state automata  $A=(Q,\Delta,Q_{0},T_a)$, and a passive intruder with projection function $P$, a set of secret state $Q_s \subset Q$, the system $A$ is \textit{current-state opaque}  if  $\forall q_0 \in Q_0$ and  $\forall s \in \mathcal{L}(A,q_0)$ such that $T_a(q_0,s) \subseteq Q_s$, there exists $q'_0 \in Q_0$ and $\exists s' \in \mathcal{L}(A,q'_0)$, such that $T_a(q'_0,s') \subseteq \{Q - Q_s\}$ and $P(s)=P(s')$.
\end{definition}
Intuitively, when the intruder  can only observe the system outputs with projection $P$,  $A$ is current-state opaque  if for  every word $s \in \mathcal{L}(A)$ leading to a secret state in $Q_s$, there exists at least  another word $s' \in \mathcal{L}(A)$ that leads to non-secret states $\{Q - Q_s\}$ whose projection is the same. Thus, the intruder can never  determine that the system's current state is in $Q_s$. 
One can check whether the system $A$ with a passive intruder is current-state opaque by constructing a \textit{current-state estimator} (observer) and by verifying that no (nonempty) current-state estimate lies entirely within the set of secret states $Q_s$ \cite{hadjicostis2014opacity}.



\begin{example} \label{exp:passive_active_intruder}
Consider the open DES $G$ depicted in Figure \ref{fig:example_NFT} with  $\Delta_{o}=\{\delta_1,\delta_2,a\}$,  $\Delta_{uo}=\{b\}$,  and $Q_{s}=\{3\}$. We first assume the intruder is passive and can only observe the  observable outputs through projection function $P$. In order to evaluate CSO on $G$,  we can associate  a NFA $A$ with  the open DES $G$. Let's consider the NFA $A_G=(Q,\Delta,Q_{0},T_a')$, where the transition function $T_a'$, for any $q,q' \in Q$, and $\delta \in \Delta$, is defined as $q' \in T_a'(q,\delta)$, if there exists $x \in X$ such that $q' \in T(q,x)$ and $\delta \in \lambda(q,x)$; otherwise $T_a'(q,\delta)$ is not defined. We can construct an observer automata to check if $A_{G}$ is current-state opaque with respect to $P$, and $Q_s$. The observer is shown in Figure \ref{fig:example_output_obs}. The observer shows the secret state $\{3\}$ never lies entirely on single state of the observer, and hence, $A$ is current-state opaque with respect to $Q_s$ and $P$. However, if the intruder is capable of providing a certain input word to the system and observe the system's output through $P$, she can infer when the system is in the secret state. Specifically, consider the input word $w=x_1 x^{*}_2 x_1$ that drives the system to land on one of the states $\{2,3\}$, and here, if the active intruder chooses $x_2$, i.e., $w \cdot x_2$ and observes $a$, she can infer the current-state of the system is certainly at  the secret state $\{3\}$. However, if $a$ is an unobservable event, the active intruder with the same input word $x_1 x^{*}_2 x_1 x_2$, cannot  determine whether the system is at $\{3\}$ or $\{2\}$.  \hfill $\Box$
\end{example}
\begin{figure}[t]  
\centering \vspace{10pt}
\begin{tikzpicture}[shorten >=1pt,node distance=2.0cm,on grid,auto, bend angle=20, thick,scale=1, every node/.style={transform shape}] 
	\node[state_obs,initial left,initial text=] (q_0)   {$\scriptstyle \{0\}$};
    \node[state_obs] (q_123) [above right =of q_0,yshift=0cm] {$\scriptstyle \{1,2,3\}$};
    \node[state_obs] (q_13) [right  =of q_0,xshift=1.0cm,yshift=0cm] {$\scriptstyle \{1,3\}$};
    \node[state_obs] (q_23) [right=of q_123,xshift=1cm] {$\scriptstyle \{2,3\}$};
	\path[->]
	(q_0) edge [] node [below,sloped,align=center]{ $\scriptstyle  \delta_2$} (q_123)
	(q_0) edge [] node [sloped]{ $\scriptstyle  \delta_1$} (q_13)
	(q_123) edge [loop left] node []     { $\scriptstyle  a$} (q_123)
	(q_123) edge [] node [sloped,align=center]     { $\scriptstyle  \delta_2$} (q_23)
	(q_123) edge [] node [sloped, above]     { $\scriptstyle  \delta_1$} (q_13)
	(q_23) edge [loop right] node []     { $\scriptstyle  \delta_2,a$} (q_23)
	(q_23) edge [bend left=40] node [sloped, below]     { $\scriptstyle  \delta_1$} (q_13)
	(q_13) edge [loop above] node []     { $\scriptstyle  a$} (q_13)
	(q_13) edge [] node [below,sloped,pos=0.5]     { $\scriptstyle  \delta_2$} (q_23)
    ;
\end{tikzpicture}\vspace{-5pt}
\caption{Current-state estimator of the  passive intruder for the open DES in Figure \ref{fig:example_NFT}.}
\label{fig:example_output_obs}\vspace{-10pt}
\end{figure}
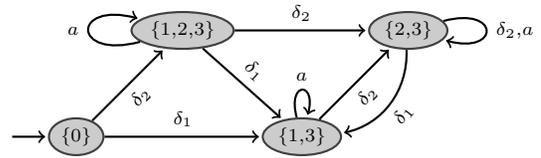 
As Example \ref{exp:passive_active_intruder} illustrates, an active intruder can force the open DES $G$ to expose his secret-state. We, therefore, need a new current-state opacity notion that captures this active intruder ability. In particular, we consider an active intruder who has full knowledge of the open DES model; and is capable of injecting input to the system and  (partially) observing the system output.

To evaluate an open DES current-state opacity, we can construct a current-state estimator that tracks the active intruder estimated states.
Given an input word accepted by the system $w \in \LG$, and an observed word $\alpha \in P(O(\LG))$, the  current-state estimator is defined by:
\begin{align*}
\Tilde{Q}^G(w,\alpha)= \{ q \in Q \mid& \exists q_0 \in Q, q\in T(q_0,w), \text{ and } \\
& \exists s \in O(w,q_0), \text{ s.t. }P(s)=\alpha  \}.
\end{align*}

The current-state estimator $\Tilde{Q}^G(w,\alpha)$ essentially characterizes a set of states which the open DES lands on as a result of the input word $w$, and meanwhile it produces the observable sequences $\alpha$.
We also define the current-state estimator for a given initial state $q_0 \in Q_0$, as $\Tilde{Q}^G_{q_0}(w,\alpha)=\{ q \in Q \mid q\in T(q_0,w), \text{ and } \exists s \in O(w,q_0), \text{ s.t. }P(s)=\alpha  \}$. We use $\Tilde{Q}(w,\alpha)$ instead of $\Tilde{Q}^G(w,\alpha)$, and $\Tilde{Q}_{q_0}(w,\alpha)$ for $\Tilde{Q}^G_{q_0}(w,\alpha)$, when it is clear from the context. 
Upon this current-state estimator, we define the reactive current-state opacity in the following. 



\begin{definition}[Reactive Current-State Opacity]
Given an open DES $G=(Q, \allowbreak X,\Delta, Q_{0},T,\lambda)$, projection function $P$, and the set of secret states $Q_s \subset Q$, the system is \textit{reactive current-state opaque} (RCS-opaque)  if for any $w \in \LG$ there exists $q_0 \in Q_0$ such that: 
\begin{itemize}
    \item  $T(q_0,w) \cap \{Q - Q_s \} \neq \emptyset$,
    \item $\forall t \in P(O(w,q_0))$, we have $\Tilde{Q}_{q_0}(w,t)  \cap \{Q - Q_s \} \neq \emptyset$.
\end{itemize}
\end{definition}

Intuitively, the open DES $G$ is RCS-opaque, if with any input word $w$ that is recognized by $G$, i.e., $w\in \LG$, i) there exists an initial state $q_0 \in Q_0$ such that  the system with $w$ does not land entirely at the secret states, i.e., $T(q_0,w) \cap \{Q- Q_s\} \neq \emptyset$; and  ii) for any possible observable output word associated with the input, $t \in P(O(w,q_0))$, we have $\Tilde{Q}_{q_0}(w,t) \cap \{Q - Q_s \} \neq \emptyset$, that is, the intruder cannot use the observed output events to resolve the non-determinism of the transition function $T(q_0,w)$ to infer the current secret state of the system.

\begin{remark}
In the definition of RCSO, the input word $w$, is not required to be restricted to the recognized words by the open DES $G$, $w \in \LG$, and it can be any $ w\in X^*$. However, clearly $G$ does not accept any $w\in \{X^* - \LG\}$, and hence, it does not reveal any secret.
\end{remark}

\begin{example} \label{exp:RCSO}
Consider the system $G$ in Figure \ref{fig:example_NFT}, with secret state set $Q_s=\{2\}$. In this case, $G$ is not RCSO since the intruder with input word $w=x_2$, and regardless of the observed output events,  can ensure the system current-state is $\{2\}$. However, if $Q_s=\{3\}$, the system with any $w \in \LG$, does not proceed solely to $Q_s$, and therefore, the intruder potentially  can use the observed output events to infer the secret state from the system's possible current-states.  For instance, with $x_1 x_1x_2$,  the  possible current-states of the system are $\{2,3 \}$, and if the observed output word is $t \cdot a$, where $t$ is any  $t\in O(x_1 x_1)$,  the intruder is able to certainly infer  the  current-state of $G$ is the secret state $\{3\}$, that indicates $G$ is not  RCS-opaque. \hfill $\Box$
\end{example}

\begin{remark}
The proposed RCSO notion with an active intruder is a generalization of CSO notion with the passive intruder. 
As it is illustrated in Example \ref{exp:passive_active_intruder}, if we consider open DES with a passive intruder who has a partial observation on the system's output,  the proposed RCSO can capture  the CSO notion. 
\end{remark}

\subsection{Other Opacity Notions}
Other notions of opacity  can be extended to the open DESs  with an active intruder. In this paper, we introduce  reactive language-based and reactive initial-state opacity notions. The reactive language-based opacity (RLBO)  characterizes  a secret run of the system that should be protected against an active intruder. 

\begin{definition}[Reactive Language-Based Opacity]
Given an open DES $G=(Q,X,\Delta, Q_{0},T,\lambda)$, projection function $P$, and secret output language $O_s \subset O(\LG)$, and non-secret output language $O_{ns} \subseteq O(\LG)$,  $G$ is reactive language-based opaque, if  for all $q_0 \in Q_0$, and any $ w \in \mathcal{L}(G,q_0)$  that $O(w,q_0) \cap O_{s} \neq \emptyset$, there exists $q'_0 \in Q_0$ such that:
\begin{itemize}
    \item $O(w,q'_0) \cap O_{ns} \neq \emptyset$,
    \item $\forall  t \in (O(w,q_0) \cap O_{s}), \exists t' \in (O(w,q'_0) \cap O_{ns})$ such that $P(t)=P(t')$.
\end{itemize}
\end{definition}

Intuitively, $G$ is reactive language-based opaque with respect to the secret output language $O_s$, non-secret output language $O_{ns}$, and the projection function $P$, if for any input word $w \in \mathcal{L}(G,q_0)$ that generates secret output word, $O(w,q_0) \cap O_{s} \neq \emptyset$, there exists an initial state $q'_0 \in Q_0$, such that the same input word from the intruder can be associated with a non-secret output word, $O(w,q'_0) \cap O_{ns} \neq \emptyset$, and additionally, 
for any secret output word $t \in O(w,q_0) \cap O_{s}$
 there exists a non-secret output word $t' \in (O(w,q'_0) \cap O_{ns})$, such that they have the same observation $P(t)=P(t')$. 

Initial-state opacity is another notion of opacity defined over the system secret initial states. 
For open DESs, reactive initial-state opacity (RISO) can be defined as follows. 

\begin{definition}(Reactive Initial State Opacity)
 Given an open DES $G=(Q, X,\Delta,\allowbreak Q_{0},T, \lambda)$, projection function $P$, and secret initial state set $Q^0_s \subset Q_{0}$, and non-secret initial state set $Q^0_{ns} \subseteq Q_{0}$,  $G$ is reactive initial-state opaque, if $\forall q_0 \in Q^0_s$ and any input words $w \in \LG$ with any $t \in O(w,q_0)$, there exists a non-secret initial-state $q'_0 \in Q^0_{ns}$ and  $ t' \in O(w,q'_0)$ such that $P(t)=P(t')$.
 
\end{definition}
An open DES $G$ is reactive initial-state opaque with respect to the secret initial-state set $Q^0_s$,  non-secret initial-state set $Q^0_{ns}$, and the projection function $P$, if for any secret initial-state $q_0 \in Q^{0}_s$, and any input word $w \in \mathcal{L}(G)$, that generates an output word $t$, i.e., $t\in O(w,q_0)$, there exists a non-secret initial state $q'_0 \in Q^0_{ns}$, and an output word $t' \in O(w,q'_0)$, associated with   $w$ and  $q'_0$,  such that, $t$ and $t'$ have the same observation, i.e., $P(t)=P(t')$.

Similar to the opacity notions with a passive intruder \cite{wu2013comparative}, there is a relationship between the proposed reactive opacity notions. We call a problem of checking if a given open DES satisfies the RCSO conditions, a RCSO problem. Similarly, in the sequel, we use the terms RLBO and RISO problems. We mainly follow the idea proposed in \cite{wu2013comparative} to transform the reactive  opacity problems to each other.

\begin{proposition} \label{prop:RLBO2RCSO}
A RLBO problem can be converted to an equivalent RCSO problem.
\end{proposition}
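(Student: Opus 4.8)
The plan is to adapt the classical passive-intruder reduction of language-based opacity to current-state opacity (cf.\ \cite{wu2013comparative}) to the transducer model; the one genuinely new feature is that the active intruder also supplies the input word $w$, which is common to the secret-producing run and to the ``covering'' run, so the reduction must carry it along faithfully. Throughout I assume, as is implicit in speaking of an ``RLBO problem,'' that $O_s$ and $O_{ns}$ are regular sublanguages of $O(\LG)$, so that they are the marked languages of deterministic complete finite automata $A_s=(Y_s,\Delta,\{y_s^0\},\eta_s,F_s)$ and $A_{ns}=(Y_{ns},\Delta,\{y_{ns}^0\},\eta_{ns},F_{ns})$ over $\Delta$; to match the $\epsilon$-outputs allowed by $\lambda$ (recall $\epsilon\in\lambda(q,x)$ always), extend $\eta_s,\eta_{ns}$ by $\eta_\bullet(y,\epsilon)=y$.

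First I would build the product NFT $G'=(Q',X,\Delta,Q_0',T',\lambda')$ with $Q'=Q\times Y_s\times Y_{ns}$, $Q_0'=Q_0\times\{y_s^0\}\times\{y_{ns}^0\}$, putting $(q',\eta_s(y_s,\delta),\eta_{ns}(y_{ns},\delta))\in T'((q,y_s,y_{ns}),x)$ and $\delta\in\lambda'((q,y_s,y_{ns}),x)$ whenever $q'\in T(q,x)$ and $\delta\in\lambda(q,x)$, and keeping the same output partition, hence the same projection $P$. An induction on $|w|$ gives an output-preserving bijection between runs of $G$ on $w$ from $q_0$ and runs of $G'$ on $w$ from $(q_0,y_s^0,y_{ns}^0)$; consequently $\mathcal{L}(G')=\LG$, $O(w,(q_0,y_s^0,y_{ns}^0))=O(w,q_0)$, and the last two coordinates of the state reached along a run producing $s$ record exactly whether $s\in O_s$ and whether $s\in O_{ns}$.

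Next I would designate as secret those states whose output-so-far is secret but not simultaneously non-secret, $Q_s'=\{(q,y_s,y_{ns})\in Q': y_s\in F_s,\ y_{ns}\notin F_{ns}\}$, and --- to reconcile the outer quantifiers, since RCSO fixes an arbitrary input and asks for one ``safe'' initial state good for every observation whereas RLBO ranges over all \emph{secret-producing} start states and, for each, asks for a matching one --- split each initial state of $G'$ into two copies, one whose transitions are restricted (via $A_s$ together with co-accessibility, i.e.\ a $Trim$) to runs still able to reach $F_s$ and one restricted via $A_{ns}$ to runs still able to reach $F_{ns}$, placing $Q_s'$ only in the first family; $Trim(G')$ then retains precisely the input words that carry a secret. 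The equivalence is then proved along the run correspondence: $\widetilde{Q}^{G'}_{q_0'}(w,t)\subseteq Q_s'$ for some observation $t$ exactly when, in $G$, the $G$-start state underlying $q_0'$ produces via $w$ a secret output with observation $t$ while every retained $w$-run with observation $t$ produces an output lying in $O_s$ and not in $O_{ns}$ --- i.e.\ exactly when the RLBO requirement fails for that $(q_0,w)$. Chasing the remaining quantifiers in both directions gives ``$G'$ is RCS-opaque $\iff$ $G$ is reactive language-based opaque,'' and since $G'$ and $Q_s'$ are effectively computable from $G,O_s,O_{ns}$, this is the desired reduction.

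The step I expect to be the main obstacle is precisely this reconciliation: RCSO is a ``$\forall$ input, $\exists$ safe start, $\forall$ observation'' statement over a \emph{binary} secret/non-secret labelling of states, while RLBO mentions \emph{three} languages ($O_s$, $O_{ns}$, and the remainder) and restricts its inner observation quantifier to observations arising from secret outputs only. Getting the definition of $Q_s'$ and the initial-state split right so that ``current-state estimate $\subseteq Q_s'$'' means exactly ``no start state covers the secret observations of $(q_0,w)$ by observation-equivalent non-secret outputs'' --- while simultaneously respecting the $\epsilon$-output transitions, the unobservable events of $\Delta_{uo}$, and the nondeterminism of $T$ and $\lambda$ --- is where the real care lies; the product construction and the inductive run correspondence themselves are routine.
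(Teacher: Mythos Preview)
Your reduction and the paper's both follow the \cite{wu2013comparative} template, but the executions differ. The paper does not form a product: it builds two separate NFTs $G_s,G_{ns}$ directly at the input--output language level, with $\mathcal{L}_{io,m}(G_s)=\{(w,s):w\in\LG,\ s\in O_s\}$ and similarly for $G_{ns}$, and simply takes their \emph{disjoint union} $G_c$, declaring the marked states $F_s$ of $G_s$ to be the secret set. Your product $G\times A_s\times A_{ns}$ followed by an initial-state split is a longer road to essentially the same place: the two families of restricted initial states you describe end up playing the role of the paper's $S_{s0}$ and $S_{ns0}$. One thing the paper's language-level route buys is that it sidesteps a genuine technical wrinkle in your product. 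In this paper's NFT model $T$ and $\lambda$ are \emph{decoupled}---for a given $(q,x)$ any next state in $T(q,x)$ may pair with any output in $\lambda(q,x)$---so in your $G'$ a run can emit $\delta$ while the $(y_s,y_{ns})$ coordinates update according to some other $\delta'\in\lambda(q,x)$; hence the claimed output-preserving run bijection, and the assertion that ``the last two coordinates record exactly whether $s\in O_s$ and whether $s\in O_{ns}$,'' fail as stated. This is repairable (refine the state to record the last emitted symbol, or work at the $\mathcal{L}_{io}$ level as the paper does), but it is a gap in the product construction as written. You are right that the quantifier reconciliation is the crux; the paper's argument is quite terse there, essentially asserting that a covering $q_0'\in Q_0$ in the RLBO sense corresponds to a safe initial state in $S_{ns0}$ for RCSO without spelling out the inner $\forall t\,\exists t'$ step.
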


\begin{proof}
Construct  an NFT $G_s=(S_s,X,\Delta,T_s,S_{s0},\lambda_s,F_s)$ such that $\mathcal{L}_{io,m}(G_s)=\{(w,s) \in (X \times \Delta)^* \mid w \in \mathcal{L}(G)\text{ and } s\in O_s \}$, and  an NFT $G_{ns}=(S_{ns},X,\Delta,T_{ns},S_{ns0},\lambda_{ns},F_{ns})$ that accepts  $\mathcal{L}_{io,m}(G_{ns})=\{(w,s) \in (X \times \Delta)^* \mid w \in \mathcal{L}(G) \text{ and } s\in O_{ns} \}$. Then consider $G_s$ and $G_{ns}$ as single NFT by constructing $G_c=(S_s \cup S_{ns},X,\Delta,T_s \cup T_{ns},S_{s0} \cup S_{ns0},\lambda_s \cup \lambda_{ns} ,F_s \cup F_{ns})$, and define the secret and non-secret state sets respectively as $Q_s=F_s$ and  $Q_{ns}=F_{ns}$.
Therefore, for any $q_0 \in Q_0$, $w \in \mathcal{L}(G,q_0)$ and $t\in O(w,q_0) \subseteq O_s$, there  exist $s_0 \in (S_{s0} \cup S_{ns0})$ and $\rho \in \mathcal{L}_{io,m}(G_c,s_0)$ with $P_X(\rho)=w$ and $P_{\Delta_o}(\rho)=t$, such that $\Tilde{Q}^{G_c}_{s_0}(w,t) \subseteq Q_s$; and  if $\exists q'_0 \in Q_0 $ and $t \in O(w,q'_0) \subseteq O_{ns}$, indicating $G$ is reactive language-based opaque, we  have $s'_0 \in (S_{s0} \cup S_{ns0})$ and $\rho'  \in \mathcal{L}_{io,m}(G_c,s'_0)$  with $P_X(\rho')=w$ and $P_{\Delta_o}(\rho')=t'$, such that $\Tilde{Q}^{G_c}_{s'_0}(w,t') \subseteq Q_{ns}$, which implies $G_c$ is RCS-opaque.
\end{proof}

The other direction of this transformation is also possible. A RCSO problem can be converted to an equivalent RLBO problem.

\begin{proposition} \label{prop:RCSO2RLBO}
A RCSO problem can be converted to an equivalent RLBO problem.
\end{proposition}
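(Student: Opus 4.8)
The plan is to mirror the construction used in Proposition~\ref{prop:RLBO2RCSO}, but in the reverse direction: given an instance of RCSO, i.e.\ an open DES $G=(Q,X,\Delta,Q_0,T,\lambda)$ with projection $P$ and secret state set $Q_s\subset Q$, I would manufacture a language-based instance on (essentially) the same transducer by using the \emph{marking} mechanism to encode which executions are secret and which are non-secret. Concretely, I would take $G$ itself as the underlying transducer, and define two copies of it with marked states: let $G_s$ be $G$ with $F_s=Q_s$ and $G_{ns}$ be $G$ with $F_{ns}=Q-Q_s$. The secret output language is then $O_s=\{\,t\in O(\LG)\mid \exists q_0\in Q_0,\ \exists w\in\mathcal{L}(G,q_0),\ t\in O(w,q_0)\text{ and some run producing }t\text{ ends in }Q_s\,\}$, realized as $P_{\Delta}(\mathcal{L}_{io,m}(G_s))$, and symmetrically $O_{ns}=P_{\Delta}(\mathcal{L}_{io,m}(G_{ns}))$; these are regular relations' projections, hence recognizable, so the construction is effective.

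The main body of the proof would then be the logical equivalence: $G$ (with $Q_s$) is RCS-opaque iff the constructed instance is reactive language-based opaque. For the forward direction I would take any $q_0$ and $w\in\mathcal{L}(G,q_0)$ with $O(w,q_0)\cap O_s\neq\emptyset$ — meaning some run on $w$ ends in a secret state — unwind the definition of $O_s$ to see that $\Tilde{Q}_{q_0}(w,t)$ meets $Q_s$ for the relevant $t$, invoke RCSO to obtain an initial state $q_0'$ with $T(q_0',w)\cap(Q-Q_s)\neq\emptyset$ and, for each observable $t'\in P(O(w,q_0'))$, a state estimate meeting $Q-Q_s$; then translate ``estimate meets $Q-Q_s$'' back into ``there is an output word in $O(w,q_0')\cap O_{ns}$ with the prescribed projection,'' which is exactly the RLBO conclusion. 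The converse direction reverses each of these translations. The bookkeeping lemma I would isolate and prove first is: for any $q_0$, $w$, and $\alpha\in P(O(w,q_0))$, we have $\Tilde{Q}_{q_0}(w,\alpha)\cap Q_s\neq\emptyset$ iff there exists $t\in O(w,q_0)\cap O_s$ with $P(t)=\alpha$ (and similarly for $Q-Q_s$ and $O_{ns}$); this is immediate from the definitions of $O(\cdot)$, $\Tilde{Q}$, and the marking, and it is the hinge that makes the two definitions line up.

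I expect the main obstacle to be a subtle mismatch between the ``$\exists q_0$'' quantifier structure in RCSO and the ``$\forall q_0$ \ldots $\exists q_0'$'' structure in RLBO. In RCSO the single chosen $q_0$ must \emph{simultaneously} witness non-secrecy for \emph{all} observable outputs of $w$, whereas in RLBO one picks a non-secret witness $q_0'$ that need only handle the secret outputs, and the ``all'' is over $t\in O(w,q_0)\cap O_s$. Reconciling these requires care in how $O_s$ and $O_{ns}$ are defined on the shared state space — in particular, using the \emph{same} transducer (rather than a disjoint union as in Proposition~\ref{prop:RLBO2RCSO}) so that ``the input word $w$ is accepted from $q_0$'' is one statement, not two, and so that an output word's membership in $O_s$ versus $O_{ns}$ is decided purely by which class of state a run terminates in. I would also need to double-check the degenerate cases (e.g.\ $w$ with no run ending in $Q_s$, so the RLBO hypothesis is vacuous, matching the fact that such $w$ cannot expose the secret) and confirm the construction respects the $\epsilon$-output conventions of $G$, but these are routine once the hinge lemma is in place.
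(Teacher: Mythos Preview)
Your construction is exactly the paper's: take $G$ with marked states $Q_s$ (resp.\ $Q_{ns}$) to form $G_s={Trim}(G,Q_s)$ and $G_{ns}={Trim}(G,Q_{ns})$, and set $O_s=P_\Delta(\mathcal{L}_{io,m}(G_s))$ and $O_{ns}=P_\Delta(\mathcal{L}_{io,m}(G_{ns}))$. The paper's proof in fact stops at this construction and does not spell out the equivalence argument, so your hinge lemma and the quantifier-structure discussion go beyond what the paper supplies.
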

\begin{proof}
Given an RCSO problem with $G=(Q,X,\Delta, Q_{0},T,\lambda)$,  secret states $Q_s \subset Q$, and non-secret  states set  $Q_{ns} \subseteq Q$. Construct an NFT  with $Q_s$ as the marked states, defined as $G_s={Trim}(Q,X,\Delta,T,Q_{0},\lambda,Q_s)$, and another NFT with $Q_{ns}$ as the marked states, given by  $G_{ns}={Trim}(Q,X,\Delta,T,Q_{0},\lambda,Q_{ns})$. Then define  the secret and non-secret output language respectively by $O_{s}= P_{\Delta}(\mathcal{L}_{io,m}(G_{s}))$ and $O_{ns}=P_{\Delta} (\mathcal{L}_{io,m}(G_{ns}))$. 
\end{proof}

The RISO is  related to the RLBO. Proposition \ref{prop:RISO2RLBO} and \ref{prop:RLBO2RISO} establish this relationship. 

\begin{proposition} \label{prop:RISO2RLBO}
RISO problem can be converted  to an equivalent RLBO problem.
\end{proposition}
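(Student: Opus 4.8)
The plan is to follow the passive-case reduction of \cite{wu2013comparative}: attach to $G$ a small ``initialization gadget'' that, on its very first move, stamps into the output word whether the run was launched from a secret or a non-secret initial state, so that the secrecy of the initial state becomes visible as a property of the output \emph{language} rather than of the initial state itself.

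Concretely, I would construct an NFT $G'=(Q\cup\{q^s,q^{ns}\},\,X\cup\{e\},\,\Delta\cup\{a_s,a_{ns}\},\,\{q^s,q^{ns}\},\,T',\lambda')$ where $e\notin X$ is a fresh input event and $a_s,a_{ns}$ are fresh \emph{unobservable} output events (so $\Delta'_o=\Delta_o$ and $P'$ coincides with $P$ on $\Delta^*$). The only initial states are $q^s$ and $q^{ns}$; reading $e$ from $q^s$ emits $a_s$ and moves nondeterministically into $Q^0_s$, i.e.\ $T'(q^s,e)=Q^0_s$ and $\lambda'(q^s,e)=\{a_s\}$, and symmetrically $T'(q^{ns},e)=Q^0_{ns}$, $\lambda'(q^{ns},e)=\{a_{ns}\}$; on $Q\times X_\epsilon$ the maps $T',\lambda'$ agree with $T,\lambda$. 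Then every word of $\mathcal L(G')$ has the form $ew$ and, reading $e$ first, $O(ew,q^s)=a_s\cdot\bigcup_{q_0\in Q^0_s}O(w,q_0)$ and likewise $O(ew,q^{ns})=a_{ns}\cdot\bigcup_{q_0\in Q^0_{ns}}O(w,q_0)$. I would then pose the RLBO problem on $G'$ with secret output language $O_s:=\{a_s t\mid \exists q_0\in Q^0_s,\ \exists w\in\mathcal L(G,q_0),\ t\in O(w,q_0)\}$ and non-secret output language $O_{ns}:=\{a_{ns}t\mid \exists q_0\in Q^0_{ns},\ \exists w\in\mathcal L(G,q_0),\ t\in O(w,q_0)\}$; both are regular, both are contained in $O(\mathcal L(G'))$, and they are disjoint since $a_s\neq a_{ns}$.

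The core step is to show that $G$ is reactive initial-state opaque iff $G'$ is reactive language-based opaque. For one direction, suppose $G'$ is RLB-opaque and take $q_0\in Q^0_s$, $w\in\mathcal L(G,q_0)$, $t\in O(w,q_0)$: then $ew\in\mathcal L(G',q^s)$ and $a_st\in O(ew,q^s)\cap O_s\neq\emptyset$, so RLBO supplies $q_0''\in\{q^s,q^{ns}\}$ with $O(ew,q_0'')\cap O_{ns}\neq\emptyset$; disjointness of the tags forces $q_0''=q^{ns}$, hence some $a_{ns}t'\in O(ew,q^{ns})\cap O_{ns}$ with $P'(a_st)=P'(a_{ns}t')$, which unwinds to $q_0'\in Q^0_{ns}$, $t'\in O(w,q_0')$ and $P(t)=P(t')$ --- the RISO condition. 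Conversely, if $G$ is RIS-opaque, any pair triggering RLBO on $G'$ must be of the form $(q^s,ew)$ with $w\in\mathcal L(G,q_0)$ for some $q_0\in Q^0_s$ (only $q^s$ can emit an $a_s$-prefixed output); choosing the witness $q_0'':=q^{ns}$, each $a_st\in O(ew,q^s)$ is matched by applying RISO to $(q_0,w,t)$ to get $q_0'\in Q^0_{ns}$ and $t'\in O(w,q_0')$ with $P(t)=P(t')$, and then taking $a_{ns}t'\in O(ew,q^{ns})\cap O_{ns}$, which also witnesses the required non-emptiness.

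I expect the delicate point to be the mismatch of quantifier order between the two notions: RISO lets the witnessing non-secret initial state $q_0'$ depend on the output word $t$, whereas RLBO fixes the witness $q_0''$ \emph{before} ranging over all matching output words. Collapsing the entire set $Q^0_{ns}$ under the single new state $q^{ns}$ (and dually $Q^0_s$ under $q^s$) is exactly what reconciles this: the unique RLBO choice $q_0''=q^{ns}$ still exposes, through the inner existential over output words of $G'$, every output reachable from every non-secret initial state of $G$. One also has to check that this merging does not fabricate output words realizable by no single run --- this is where the correspondence between elements of $O(w,q_0)$ and runs of $G$ on $w$ from $q_0$ is used --- and that keeping $a_s,a_{ns}$ unobservable is essential, since an observable tag would let the intruder read off the secrecy of the initial state and make the transformed problem trivially non-opaque exactly when the original is.
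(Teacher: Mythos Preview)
Your reduction is correct, but it is genuinely different from the paper's. The paper does not introduce any gadget: it simply restricts $G$ to the initial-state set $Q^0_s\cup Q^0_{ns}$, takes $G_l=\mathrm{Trim}(Q,X,\Delta,T,Q^0_s\cup Q^0_{ns},\lambda)$, and declares $O_s=O(\mathcal L(G,Q^0_s))$ and $O_{ns}=O(\mathcal L(G,Q^0_{ns}))$ --- that is, ``secret'' output words are exactly those producible from some secret initial state, with no fresh tagging events and no new initial states. Your construction instead merges all of $Q^0_s$ (resp.\ $Q^0_{ns}$) under a single fresh state and stamps the output with an unobservable marker $a_s$ (resp.\ $a_{ns}$), forcing $O_s$ and $O_{ns}$ to be disjoint.

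What your route buys is a clean treatment of the quantifier-order mismatch you correctly identify: because the unique RLBO witness $q^{ns}$ already exposes $\bigcup_{q_0'\in Q^0_{ns}}O(w,q_0')$, the per-$t$ choice of $q_0'$ demanded by RISO is absorbed into the inner existential of RLBO, and the disjoint tags make both directions of the equivalence essentially syntactic. The paper's construction is lighter (no new alphabet, no new states) but its proof is only a sketch and does not argue the equivalence; in particular it does not discuss how a single $q_0'$ in $G_l$ can serve all $t\in O(w,q_0)\cap O_s$, nor what happens when $O_s$ and $O_{ns}$ overlap. Your argument is therefore more complete, at the cost of a slightly heavier transformed instance.
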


\begin{proof}
Given open RISO problem with  $G=(Q,X,\Delta,T,Q_{0},\lambda)$,  secret initial-state set $Q^0_s \subset Q_0$, and non-secret initial  state set $Q^0_{ns} \subseteq Q_{0}$, construct an NFT by trimming $G$ to only the secret initial-state set $Q^0_s$, given as $G_s = Trim(Q,X,\Delta,T,Q^0_{s},\lambda)$, and similarly  construct another NFT with $Q^0_{ns}$ as initial-state set, $G_{ns} = Trim(Q,\allowbreak X,\Delta,T,Q^0_{ns}, \lambda)$. Then combine $G_s$ and $G_{ns}$ as $G_{l}=Trim(Q,X,\Delta,T,Q^0_{s}\cup Q^0_{ns},\lambda)$, and  define the secret and non-secret output languages respectively by $O_s=O(\mathcal{L}(G,Q^0_{s}))$, and $O_{ns}=O(\mathcal{L}(G,Q^0_{ns}))$.
\end{proof}

The other direction of this transformation does not always hold. A RLBO problem can  be transformed to an equivalent RISO only if $O_s$ and $O_{ns}$ are prefix-closed.
\begin{proposition} \label{prop:RLBO2RISO}
Given a RLBO problem with prefix-closed $O_s$ and $O_{ns}$, there exists an equivalent RISO problem.
\end{proposition}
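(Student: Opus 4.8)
The plan is to follow the same pattern as Propositions~\ref{prop:RLBO2RCSO}--\ref{prop:RISO2RLBO}: encode the statement ``the output produced is secret'' as ``the run was started from a secret initial state.'' Prefix-closedness is exactly what makes this possible without using marked states. Both $O_s$ and $O_{ns}$ are regular, being prefix-closed sublanguages of the regular language $O(\LG)$, so let $D_s$ be the DFA recognizing $O_s$ with its dead sink state deleted; since every prefix of a word of $O_s$ lies in $O_s$, the partial automaton $D_s$ admits a run on a word $t$ from its initial state if and only if $t\in O_s$. Construct $D_{ns}$ from $O_{ns}$ in the same way.

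Next I would form the synchronous product of $G$ with $D_s$: its states are pairs $(q,p)$ of a state of $G$ and a state of $D_s$, there is a transition from $(q,p)$ to $(q',p')$ on input $x$ emitting output $\delta$ exactly when $q'\in T(q,x)$, $\delta\in\lambda(q,x)$ and $p'$ is the $D_s$-successor of $p$ on $\delta$ (with $p'=p$ when $\delta=\epsilon$), and its initial states are $\{(q_0,p_0^{s})\mid q_0\in Q_0\}$, where $p_0^{s}$ is the initial state of $D_s$; let $G_s$ be the accessible part $Ac(\cdot)$ of this product, and define $G_{ns}$ analogously from $D_{ns}$. Let $G_l$ be the disjoint union of $G_s$ and $G_{ns}$; it retains the input alphabet $X$ and the output alphabet $\Delta$ with its observable/unobservable partition, hence the same projection $P$. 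Take $Q^{0}_{s}$ to be the set of initial states of $G_s$ and $Q^{0}_{ns}$ the set of initial states of $G_{ns}$. The structural fact underlying the argument is that for every $q_0\in Q_0$ and every $w\in\mathcal{L}(G,q_0)$ the set of outputs producible in $G_l$ on input $w$ from $(q_0,p_0^{s})$ equals $O(w,q_0)\cap O_s$, and from $(q_0,p_0^{ns})$ equals $O(w,q_0)\cap O_{ns}$; this is where prefix-closedness is used, since it is what guarantees that the product run on such an output never dies in the $D_s$ (resp.\ $D_{ns}$) component before $w$ is consumed.

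It then remains to verify that $G$ is reactive language-based opaque with respect to $(O_s,O_{ns},P)$ if and only if $G_l$ is reactive initial-state opaque with respect to $(Q^{0}_{s},Q^{0}_{ns},P)$, which I would do by unwinding both definitions against the structural fact: a non-secret RLBO witness, i.e.\ some $q'_0\in Q_0$ with $O(w,q'_0)\cap O_{ns}\neq\emptyset$ and $P(t)=P(t')$, corresponds to the non-secret initial state $(q'_0,p_0^{ns})\in Q^{0}_{ns}$ together with the output $t'\in O(w,q'_0)\cap O_{ns}$, and the outer quantifier ``$\forall q_0\in Q_0$'' of RLBO matches ``$\forall q_0\in Q^{0}_{s}$'' of RISO through the bijection between $Q^{0}_{s}$ and $Q_0$. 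I expect the main obstacle to be aligning the \emph{nested} quantifiers in the two conditions: RLBO demands, for each generating pair $(q_0,w)$, a single non-secret witness initial state that works for \emph{all} secret outputs $t\in O(w,q_0)\cap O_s$ at once, whereas the RISO condition on $G_l$ would a priori let that witness vary with $t$. Closing this gap is the crux --- it calls for keeping enough bookkeeping in the product (in particular the $Q_0$-component of the state, and, if needed, merging all non-secret initial states into a single one from which the producible outputs on $w$ are $O(w)\cap O_{ns}$), and it is precisely the prefix-closedness of $O_{ns}$ that keeps the chosen witness valid as $t$ ranges over $O(w,q_0)\cap O_s$.
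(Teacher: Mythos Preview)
Your construction coincides with the paper's: build NFTs $G_s,G_{ns}$ whose input--output behaviour restricts the output component to $O_s$ and $O_{ns}$ respectively (you do this explicitly via products with the trimmed DFAs $D_s,D_{ns}$; the paper simply posits NFTs with the required $\mathcal{L}_{io}$), take their disjoint union $G_c$, and set $Q^0_s$, $Q^0_{ns}$ to be the initial-state sets of $G_s$ and $G_{ns}$. The paper's proof stops at this construction and offers no verification of equivalence, whereas you go further and correctly isolate the role of prefix-closedness in making the unmarked product capture exactly $O(w,q_0)\cap O_s$ (resp.\ $O_{ns}$) from each initial state.

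The quantifier mismatch you flag at the end is genuine and is not addressed by the paper's sketch either: RISO on the constructed system unwinds to ``$\forall q_0\,\forall w\,\forall t\,\exists q'_0\,\exists t'$,'' while RLBO on $G$ is ``$\forall q_0\,\forall w\,\exists q'_0\,\forall t\,\exists t'$,'' and the latter is strictly stronger in general. Your proposed remedies, however, do not close this gap. Merging the non-secret initial states into one replaces $O(w,q'_0)\cap O_{ns}$ by $O(w)\cap O_{ns}$ on the RISO side, but RLBO still demands a single $q'_0\in Q_0$ that works for all $t$ simultaneously; and prefix-closedness of $O_{ns}$ governs whether the product run survives along every prefix, not whether one witness is uniform across all secret outputs $t$. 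In short, you faithfully reproduce the paper's argument and diagnose its weak point more sharply than the paper does, but the final equivalence step remains open in your proposal just as it does in the paper's own proof.
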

\begin{proof}
Given an RLBO problem with the open DES $G=(Q,X,\Delta,T,Q_{0},\lambda)$, and  prefix-closed secret output language $O_s \subset O(\mathcal{L}(G))$, and prefix-closed non-secret  output language $O_{ns} \subseteq O(\mathcal{L}(G))$. Construct  an NFT $G_s=(S_s,X,\Delta,T_s,S_{s0},\lambda_s)$ such that $\mathcal{L}_{io}(G_s)=\{\rho \in  (X \times \Delta)^* \mid P_X(\rho) \in \mathcal{L}(G)\text{ and } P_\Delta(\rho)\in O_s \}$, and  an NFT $G_{ns}=(S_{ns},X,\Delta,T_{ns},S_{ns0},\lambda_{ns})$ that accepts  $\mathcal{L}_{io}(G_{ns})=\{\rho \in  (X \times \Delta)^* \mid P_X(\rho)\in \mathcal{L}(G) \text{ and } P_\Delta(\rho)\in O_{ns} \}$. Then consider $G_s$ and $G_{ns}$ as single NFT by constructing $G_c=(S_s \cup S_{ns},X,\Delta,T_s \cup T_{ns},S_{s0} \cup S_{ns0},\lambda_s \cup \lambda_{ns})$, and define the secret and non-secret initial-state sets respectively as $Q^0_s=S_{s0}$ and  $Q^0_{ns}=S_{ns0}$. 
\end{proof}

\begin{remark}
It is shown that the proposed RCSO and RLBO are equivalent properties for $G$. The RISO  can be transformed to a RLBO property, however, the reverse of this transformation (RLBO to RISO), only holds for prefix-closed secret and non-secret languages. Therefore, if the prefix-closed conditions hold, RISO is also an equivalent property to RCSO.
Figure \ref{fig:reactive_opacity_relation} illustrates this relation.
\end{remark}

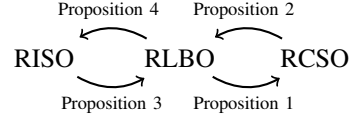
\begin{figure}
\centering \vspace{10pt}
\begin{tikzpicture}[shorten >=1pt,node distance=1.8cm,on grid,auto, bend angle=20, thick,scale=1, every node/.style={transform shape}] 
	\node (RLBO) [ xshift=0cm,yshift=0cm] {RLBO };
	\node (RCSO) [ right of=RLBO, xshift=0cm,yshift=0cm] {RCSO };
	\node (RISO) [ left of=RLBO, xshift=0cm,yshift=0cm] {RISO };
  	\path[->]
    (RLBO) edge [bend right=30]  node [below,sloped,align=center]{\scriptsize Proposition \ref{prop:RLBO2RCSO} }   (RCSO)
     (RCSO) edge [bend right=30]  node [above,sloped,align=center]{\scriptsize Proposition \ref{prop:RCSO2RLBO} } (RLBO)
    (RLBO) edge [bend right=30] node [above,sloped,align=center]{    	\scriptsize  Proposition \ref{prop:RLBO2RISO}
    } (RISO)
    (RISO) edge  [bend right=30]  node [below,sloped,align=center]{\scriptsize Proposition \ref{prop:RISO2RLBO}} (RLBO)
    ;
\end{tikzpicture}\vspace{-10pt}
\caption{The equivalence relation in the reactive opacity notions.}
\label{fig:reactive_opacity_relation} \vspace{-10pt}
\end{figure}

\section{RCSO Verification} \label{sec:RCSO_verf}
In this section, we present the verification of RCSO notion for open DESs. Similar to current-state opacity with a passive intruder \cite{saboori2007notions}, we can construct an observer automata to verify if an open DES is RCS-opaque.  
In conventional opacity with a passive intruder, the observer is constructed to track the system states based on the observable events \cite{hadjicostis2014opacity}.
In the reactive opacity formalism, however, the intruder knows the injected input word, and hence the system (non-deterministic) transitions. As it is illustrated in Example \ref{exp:RCSO}, the active intruder can utilize the system observable responses to resolve the ambiguity of his estimation caused by the system's non-deterministic transition.  The observer for RCSO verification ,therefore, should include both possible input and observable output behavior of the system to track the estimated states.  Furthermore, an open DES may only have a single and perhaps unique unobservable output event for a given input that can reveal a secret state. Therefore,  in contrary to the conventional opacity with passive intruder,  an active intruder can even use an unobservable response to infer the open DES states. This ability should be encoded in the active intruder observer.

\begin{definition}[Observer for RCSO] \label{def:RCSO_observer}
Given an open DES $G=(Q,X,\Delta,Q_0,T,\lambda)$, a projection function $P$ with respect to the observable output events $\Delta_o$, the observer automata is a deterministic finite-state automata $G_o=Ac(\hat{Q},X, \Delta_o,\hat{Q}_0,T_o)$  with state set $\hat{Q}=2^Q$, the initial state set is $\hat{Q}_0=Q_0 \cup \{ q \in Q \mid \exists q_0 \in Q_0, \text{ s.t }  q\in T(q_0,\epsilon) \}$. 
Let's denote $\Delta_{o,\epsilon}=\Delta_o \cup \{\epsilon\}$, the transition function is $T_o: \hat{Q} \times X_\epsilon \times \Delta_{o,\epsilon} \to {\hat{Q}}$, that for any $\hat{q} \in \hat{Q}$, $x \in X_\epsilon$, and an observable event $\delta \in \Delta_o$ is given by $T_o(\hat{q},(x,\delta))=\{\hat{q}' \in \hat{Q}  \mid \exists q \in \hat{q} \text{ s.t } \hat{q}'\subseteq T(q,x) \text{ and } \delta \in \lambda(q,x)\}$, and for an unobservable event, it is defined by
$T_o(\hat{q},(x,\epsilon))=\{\hat{q}' \in \hat{Q}\mid \exists q \in \hat{q} \text{ s.t } \hat{q}' \subseteq T(q,x) \text{ and }  \exists \delta_{uo} \in (\Delta_{uo} \cup \{\epsilon\} ) \text{ s.t. } \delta_{uo} \in  \lambda(q,x)\}$.  


\end{definition}
The initial estimated states $\hat{Q}_0$ is constructed based on the combination of the possible initial states, $Q_0$, and any initial transitions with no input to the open DES, i.e., $T(q_0,\epsilon)$. Note that, based on the definition of open DES in Definition \ref{def:NFT}, for any $q_0 \in Q_0$, we have  $\epsilon \in  \lambda(q_0,\epsilon)$, and therefore, $\hat{Q}_0$ is solely defined based on $Q_0$ and $T(q_0,\epsilon)$. In the constructed observer, $T_o(\hat{q},(x,\epsilon))$ captures the active intruder ability to infer the system transition when he injects input $x$ and receives no observable output.

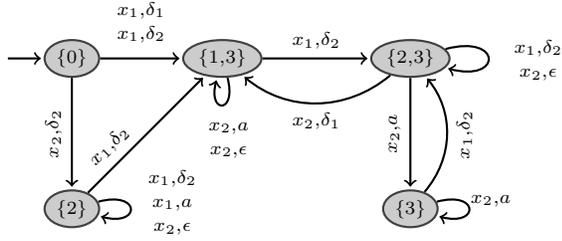
\begin{figure}[t]  
\centering
\begin{tikzpicture}[shorten >=1pt,node distance=2.0cm,on grid,auto, bend angle=20, thick,scale=1, every node/.style={transform shape}] 
	\node[state_obs,initial left,initial text=] (q_0)   {$\scriptstyle \{0\}$};
    \node[state_obs] (q_13) [right  =of q_0,xshift=0.0cm,yshift=0cm] {$\scriptstyle \{1,3\}$};
    \node[state_obs] (q_23) [right=of q_13,xshift=0.5cm] {$\scriptstyle \{2,3\}$};
    \node[state_obs] (q_2) [below=of q_0,xshift=.0cm] {$\scriptstyle \{2\}$};
    \node[state_obs] (q_3) [below=of q_23,xshift=0.0cm] {$\scriptstyle \{3\}$};
	\path[->]
	(q_0) edge [] node [sloped,above, rotate=180]{$\scriptstyle  x_2,\delta_2$ } (q_2)
	(q_0) edge [] node [sloped,above]{ \begin{tabular}{c} $\scriptstyle  x_1,\delta_1$\\[-1mm]$\scriptstyle  x_1,\delta_2$\end{tabular}} (q_13)
	(q_13) edge [] node [sloped,align=center]     { $\scriptstyle  x_1,\delta_2$} (q_23)
	 (q_13) edge [loop below] node [pos=.25]     { \begin{tabular}{c} $\scriptstyle  x_2,a$\\[-1mm]$\scriptstyle  x_2,\epsilon$\end{tabular}} (q_13)
	(q_23) edge [loop right] node []     {  \begin{tabular}{c} $\scriptstyle  x_1,\delta_2$\\[-1mm]$\scriptstyle  x_2,\epsilon$\end{tabular}} (q_23)
 	(q_23) edge [bend left=40] node [sloped, below]     { $\scriptstyle  x_2,\delta_1$} (q_13)
 	(q_23) edge [] node [sloped, above,rotate=180]     { $\scriptstyle  x_2,a$} (q_3)
 	(q_3) edge [loop right] node [pos=.25]     { $\scriptstyle  x_2,a$} (q_3)
 	(q_3) edge [bend right=40] node [sloped, below]     { $\scriptstyle  x_1,\delta_2$} (q_23)
 	(q_2) edge [sloped] node [pos=.5]     { $\scriptstyle  x_1,\delta_2$} (q_13)
 	(q_2) edge [loop right] node [pos=.25]     {\begin{tabular}{c} $\scriptstyle  x_1,\delta_2$\\[-1mm]$\scriptstyle  x_1,a$\\[-1mm]$\scriptstyle  x_2,\epsilon$ \end{tabular}} (q_2)
    ;
\end{tikzpicture}\vspace{-10pt}
 \caption{Observer automata for the open DES in Example \ref{exp:open_des}. For clarity of the figure we remove all the  transitions for the empty input, $x=\epsilon$.} 
\label{fig:example_observer} 
\end{figure} 

Given the constructed observer $G_o$, one can verify if $G$ is RCS-opaque by checking if there exists any state $\hat{q} \in \hat{Q}$ which is reachable from $\hat{Q}_0$ and only contains the system secret states $Q_s$, i.e., $ \hat{q} \subseteq Q_s$. 
The RCSO verification based on the proposed observer construction is formally given in the following theorem. 

\begin{theorem}
Given an open DES $G=(Q,X,\Delta, Q_{0},T,\lambda)$, the projection function $P$, the secret state set $Q_s \subset Q$,  the associated observer  $G_o=Ac(\hat{Q},X,\Delta_o,\hat{Q}_0,T_o)$ can be constructed by following Definition \ref{def:RCSO_observer}. Then  $ G$ is   RCS-opaque if and only if for all $\hat{q} \in \hat{Q}$ either $\hat{q}=\emptyset$ or $\hat{q} \not \subseteq Q_s $ holds.
\end{theorem}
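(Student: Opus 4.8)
The plan is to establish the two directions of the biconditional by tying the observer's reachable states to the current-state estimators $\tilde{Q}_{q_0}(w,t)$ appearing in the definition of RCSO. The central technical fact I would isolate first is a reachability lemma: for every $\hat{q}\in\hat{Q}$ reachable from $\hat{Q}_0$ there exist $q_0\in Q_0$, an input word $w\in\mathcal{L}(G,q_0)$ and an observable word $t\in P(O(w,q_0))$ such that $\hat q \subseteq \tilde{Q}_{q_0}(w,t)$; and conversely, for every $q_0\in Q_0$, every $w\in\mathcal{L}(G,q_0)$ and every $t\in P(O(w,q_0))$, the set $\tilde{Q}_{q_0}(w,t)$ (or at least a nonempty subset of it that still meets $Q-Q_s$ exactly when $\tilde Q_{q_0}(w,t)$ does) is reachable in $G_o$. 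Both halves are proved by induction on $|w|$ (equivalently on the length of the path in $G_o$), using the definition of $T_o$ in Definition \ref{def:RCSO_observer} split into the observable case $(x,\delta)$ with $\delta\in\Delta_o$ and the unobservable case $(x,\epsilon)$; the latter accounts for the active intruder inferring a transition after injecting $x$ and seeing no observable output, which is precisely why $G_o$ must branch on both $X_\epsilon$ and $\Delta_{o,\epsilon}$. The $\epsilon$-input transitions and the initialization $\hat Q_0 = Q_0 \cup \{q\mid \exists q_0\in Q_0,\ q\in T(q_0,\epsilon)\}$ handle the base case.

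For the ``only if'' direction I argue by contraposition: suppose some reachable $\hat q\in\hat Q$ is nonempty and $\hat q\subseteq Q_s$. By the reachability lemma there are $q_0$, $w$, $t$ with $\hat q\subseteq\tilde Q_{q_0}(w,t)$; in fact the observer is deterministic and its states track the estimator exactly along a chosen path, so I would strengthen the lemma so that $\hat q$ equals $\tilde Q_{q_0}(w,t)$ for the specific $(w,t)$ reached, giving $\tilde Q_{q_0}(w,t)\subseteq Q_s$. Since $\tilde Q_{q_0}(w,t)\subseteq T(q_0,w)$, this already shows that for this particular $q_0$ the second bullet of the RCSO definition fails. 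To conclude non-opacity I must show it fails for \emph{every} $q_0'\in Q_0$: here I use that the intruder knows the structure, so I take the union over initial states — the observer's initial macro-state $\hat Q_0$ collects all of $Q_0$, and determinism of $G_o$ means the state reached on $(w,t)$ is $\tilde Q^G(w,t)=\bigcup_{q_0'}\tilde Q_{q_0'}(w,t)$. Thus the hypothesis should really be phrased against the version of the estimator that already unions over $Q_0$, which matches how $\hat Q_0$ is defined; with that alignment, $\hat q\subseteq Q_s$ says $\tilde Q^G(w,t)\subseteq Q_s$, so no $q_0'$ can satisfy the second RCSO bullet at $(w,t)$, hence $G$ is not RCS-opaque.

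For the ``if'' direction, assume every reachable $\hat q$ is either empty or not contained in $Q_s$, and fix an arbitrary $w\in\mathcal{L}(G)$. Because $w\in\mathcal{L}(G,q_0)$ for some $q_0\in Q_0$ there is at least one observable word $t\in P(O(w,q_0))$, and running $G_o$ from $\hat Q_0$ on a $(w,t)$-labelled path lands in a reachable state $\hat q$ which by the reachability correspondence equals $\tilde Q^G(w,t)$ and is nonempty. By hypothesis $\hat q\not\subseteq Q_s$, i.e.\ $\tilde Q^G(w,t)\cap(Q-Q_s)\neq\emptyset$; doing this for all $t$ and noting $\tilde Q^G(w,t)\subseteq T(q_0,w)$ for a suitable choice of $q_0$ gives both bullets of the RCSO definition simultaneously (pick $q_0$ realizing the nonempty intersection for $t=\epsilon$ or, more carefully, argue there is a single $q_0$ that works for all $t$ — see below). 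Hence $G$ is RCS-opaque.

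The main obstacle is the quantifier structure: the RCSO definition requires \emph{one} initial state $q_0$ that works for \emph{all} observable outputs $t\in P(O(w,q_0))$, whereas the observer naturally tracks $\tilde Q^G$, which unions over all initial states and thus may hide which single $q_0$ is responsible. I expect to resolve this by noting that the definitions of $O(w,q_0)$ and $T_o$ are compatible with this union — if $\tilde Q^G(w,t)\cap(Q-Q_s)\neq\emptyset$ for every $t$, then one can select, $t$ by $t$, a non-secret state, and trace it back through the (deterministic, hence path-unique) observer transitions to a common $q_0\in Q_0$; the finiteness of $Q_0$ and a pigeonhole/compactness argument over the tree of observable responses, or alternatively a direct reformulation of the RCSO definition showing it is equivalent to the ``union'' version, closes the gap. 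I would flag this equivalence of the per-$q_0$ and the unioned formulations as the one place where the proof needs genuine care rather than routine induction.
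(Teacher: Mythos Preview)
Your overall architecture matches the paper's proof: both directions rest on identifying each reachable observer state with the global estimator $\tilde Q(w,\alpha)$, and both are argued by contraposition/contradiction. The paper simply \emph{asserts} the correspondence (``$\tilde Q(w,\alpha)$ and $T_o(Q_0,\rho)$ provide the same estimated states'') without proof; your plan to establish it by induction on $|w|$, splitting on the observable versus $(x,\epsilon)$ cases of $T_o$, is the natural way to substantiate that assertion.

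Where you diverge is the quantifier issue you isolate at the end. The paper never confronts it: its proof works exclusively with the union estimator $\tilde Q(w,\alpha)=\bigcup_{q_0}\tilde Q_{q_0}(w,\alpha)$, and in the sufficiency direction it negates RCSO directly as ``$\exists w,\ \exists\alpha\in P(O(w))$ with $\tilde Q(w,\alpha)\subseteq Q_s$'', i.e.\ it silently treats RCSO as the union formulation rather than the per-$q_0$ one that is actually stated. So the ``genuine care'' you anticipate is absent from the paper's argument. Your instinct that this is the crux is right, but be wary of the pigeonhole/compactness patch you sketch: the per-$q_0$ definition and the union formulation are \emph{not} equivalent in general (take two initial states with disjoint observable output sets on some $w$, one reaching only non-secret states, the other only secret states; the per-$q_0$ definition is then satisfied by choosing the first, yet the observer isolates the secret state via the second's output). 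The clean route---and effectively what the paper does without saying so---is to regard the theorem as characterising the union-based reading of RCSO; proving the per-$q_0$ version equivalent to that is a separate claim that neither your sketch nor the paper's proof actually secures.
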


\begin{proof}
Necessary: here we show if $G$ is RCS-opaque, then there is no state $\hat{q} \in \hat{Q}$ in the constructed observer (following Definition \ref{def:RCSO_observer}) that $ \hat{q}\not \subseteq Q_s $. Let's denote  $Q_o \subseteq \hat{Q}$ as the reachable states in $G_o$.
To prove this part, we only need to show that for any input word  and the observed output word, the states in the observer $G_{o}$ are the estimated current-state of the system. Consider any $\rho \in (X \times \Delta)^*$, such that $T_o(Q_0,\rho)!$, then since $\rho \in P_{X\Delta_o}(\mathcal{L}_{io}(G))$, there should exists  $w \in \mathcal{L}(G)$, and $\alpha \in P(O(w))$ such that $P_X(\rho)=w$,  $P_{\Delta_o}(\rho)=\alpha$, and $\Tilde{Q}(w,\alpha) \neq \emptyset$. In addition,
following Definition  \ref{def:RCSO_observer}, $\Tilde{Q}(w,\alpha)$ and   $T_o(Q_0,\rho)$ provides the same estimated states, meaning, for any $q \in \Tilde{Q}(w,\alpha)$, we have $\hat{q}=T_o(Q_0,\rho)$ with $q \in \hat{q}$. Therefore, if $G$ is RCS-opaque, then $\Tilde{Q}(w,\alpha) \not \subseteq Q_s$ which implies $\hat{q} \not  \subseteq Q_s$. 

Sufficiency: here we show if for all $\hat{q} \in Q_o$, we have $ \hat{q} \not \subseteq Q_s $ then $G$ should be RCS-opaque. We prove this part by contradiction. Let's assume $G$ is not RCS-opaque that implies there should exists a  $w \in \mathcal{L}(G)$ such that $\Tilde{Q}(w,\alpha) \subseteq Q_s$ for some $\alpha \in P(O(w))$. Therefore, similar to the necessary part, we know $\Tilde{Q}(w,\alpha)$ and $T_o(Q_0,\rho)$ with $P_X(\rho)=w$ and $P_{\Delta_o}(\rho)=\alpha$, provide the same estimated states. This implies,  we have the observer state $\hat{q}=T_o(Q_0,\rho)$ that $\hat{q} \subseteq Q_s$ which contradicts the first assumption.
\end{proof}

The following example illustrates the observer construction described above.

\begin{example}
Consider the open DES $G$ in Figure \ref{fig:example_NFT} with $Q_s=\{3\}$,  $\Delta_o=\{ \delta_1,\delta_2,a\}$, and  $\Delta_{uo}=\{ b\}$. The constructed observer for $G$ is shown in Figure \ref{fig:example_observer}. An edge label is in the form of $x,\delta$, where $x \in X$, and $\delta \in \Delta_{o,\epsilon}$. 
As it is shown in the Figure  \ref{fig:example_observer}, the secret state $\{3\}$ is reachable from the initial state in the constructed observer, indicating that  $G$ is not RCS-opaque.  \hfill $\Box$
\end{example}

\section{Conclusion} \label{sec:RCSO_con}
 In the conventional opacity formalism, the intruder is considered as a passive observer. In this paper, we studied opacity in the presence of an active intruder which beyond a passive observation,  is capable of manipulating the system behavior. In this setup, the active intruder can inject a certain input to the system and combine it with the observed system response to infer the secrets. We therefore introduced reactive opacity notions which characterize a property that regardless of how the intruder selects the input word, the system's secret property remains  indistinguishable from the non-secrets. We furthermore showed that all the proposed reactive opacity notions can be transformed into the RCSO.  Given a RCSO notion and a system modeled as NFT, we proposed an automata-based method to verify if the system respects RCSO requirements. 
 In the future works, we plan to study probabilistic reactive opacity for stochastic DESs.

\bibliographystyle{IEEEtran}        
\bibliography{RCSO.bib}

\end{document}